\documentclass[conference]{IEEEtran}
\IEEEoverridecommandlockouts
\usepackage{cite}
\usepackage{amsmath,amssymb,amsfonts}
\usepackage{algorithmic}
\usepackage{amsthm}
\usepackage{graphicx}
\usepackage{textcomp}
\usepackage{xcolor}
\usepackage{subfig}
\usepackage{booktabs}
\usepackage[font=footnotesize]{caption}
\def\BibTeX{{\rm B\kern-.05em{\sc i\kern-.025em b}\kern-.08em
    T\kern-.1667em\lower.7ex\hbox{E}\kern-.125emX}}

\newtheorem{theorem}{Theorem}[section]

\usepackage{xspace}
\usepackage{bbm}

\newcommand{\cv}{{\bf c}}

\newcommand{\hv}{{\bf h}}

\newcommand{\wv}{{\bf w}}
\newcommand{\qv}{{\bf q}}
\newcommand{\xv}{{\bf x}}
\newcommand{\yv}{{\bf y}}
\newcommand{\zv}{{\bf z}}

\def\textiid{i.i.d.\@\xspace}
\newcommand\iid{\ifmmode\text{ i.i.d. } \else \textiid \fi}

\begin{document}

\title{Precoding-Oriented CSI Feedback Design
with Mutual Information Regularized VQ-VAE}

\author{\IEEEauthorblockN{Xi Chen}
\IEEEauthorblockA{\textit{Department of ECE, Rutgers University} \\
Piscataway, NJ, USA \\
xi.chen15@rutgers.edu}
\and
\IEEEauthorblockN{Homa Esfahanizadeh}
\IEEEauthorblockA{\textit{Nokia Bell Labs} \\
Murray Hill, NJ, USA \\
homa.esfahanizadeh@nokia-bell-labs.com}
\and
\IEEEauthorblockN{Foad Sohrabi}
\IEEEauthorblockA{\textit{Nokia Bell Labs} \\
Murray Hill, NJ, USA \\
foad.sohrabi@nokia-bell-labs.com}
}

\maketitle

\begin{abstract}
Efficient channel state information (CSI) compression at the user equipment plays a key role in enabling accurate channel reconstruction and precoder design in massive multiple-input multiple-output systems. A key challenge lies in balancing the CSI feedback overhead with the achievable downlink rate, i.e., maximizing the utility of limited feedback to maintain high system performance. In this work, we propose a precoding-oriented CSI feedback framework based on a vector quantized variational autoencoder, augmented with an information-theoretic regularization. To achieve this, we introduce a differentiable mutual information lower-bound estimator as a training regularizer to promote effective utilization of the learned codebook under a fixed feedback budget. Numerical results demonstrate that the proposed method achieves rates comparable to variable-length neural compression schemes, while operating with fixed-length feedback. Furthermore, the learned codewords exhibit significantly more uniform usage and capture interpretable structures that are strongly correlated with the underlying channel state information.
\end{abstract}

\begin{IEEEkeywords}
Channel state information (CSI), multiple-input multiple-output (MIMO), mutual information (MI), vector quantized variational autoencoder (VQ-VAE)
\end{IEEEkeywords}

\section{Introduction}
One of the most promising technologies to meet the stringent key performance indicators of future wireless communications is massive multiple-input multiple-output (MIMO) communication. To exploit the full potential of the massive MIMO systems, accurate channel state information (CSI) must be obtained at the base station (BS). One conventional way to obtain the channel quality of a user, applicable in both frequency-division duplex (FDD) and time-division duplex (TDD) operation modes, involves performing downlink channel training. In this method, the BS first broadcasts pilots via downlink channels. Each user equipment (UE) then estimates its own CSI based on the received pilots. Finally, each UE feeds back the estimated CSI to the BS via dedicated uplink channels. The CSI feedback from the UE can then be used for various purposes, such as precoding.

In massive MIMO downlink training, CSI compression at the UE is critical for accurate channel reconstruction at the BS and for reducing the uplink feedback overhead. Existing approaches range from analytically motivated compression schemes based on sparsity or low-rank channel models \cite{love2008overview,rao2014distributed,schniter2014channel,gao2015spatially} to data-driven autoencoder architectures that jointly optimize compression and reconstruction \cite{guo2020deep,guo2022overview}. Under strict feedback bandwidth constraints, effective CSI compression must preserve the maximum possible information from the UE’s received pilot signals by fully exploiting the available feedback bits to represent this information. Deep learning–based precoding-oriented CSI feedback methods have been shown to improve the final achievable sum rate \cite{sohrabi2021deep}. However, existing approaches often fail to fully utilize the available feedback bits, as the learned codewords are non-uniformly used, leading to unnecessary information loss and suboptimal performance. While neural entropy coding has been proposed to address this issue by enabling variable-length feedback \cite{carpi2023precoding}, it introduces additional decoding complexity to the system, which complicates practical system deployment.

In this work, we propose a framework based on vector quantized variational autoencoder (VQ-VAE) \cite{van2017neural} for precoding-oriented CSI feedback design. In the downlink training phase, VQ-VAE maps the continuous encoder input, i.e., noisy pilot observations at the UE, to a discrete latent representation via an explicitly learned codebook, where each encoder output is quantized to its nearest codeword in Euclidean distance. The corresponding codeword indices are sent back to the BS, which then decodes the selected codewords to construct linear precoders that maximize the UEs’ sum achievable rate.

Under a fixed feedback bandwidth constraint, we aim to maximize the effective utilization of the learned codebook. To achieve this, we derive a novel mutual information (MI) lower bound estimator to measure the dependence between the UE’s noisy observations and the codewords indexed in the feedback. Additionally, we introduce a differentiable relaxation that that facilitates the integration of the MI estimator as a regularization term in end-to-end VQ-VAE training. Maximizing this MI encourages high-entropy (i.e, more uniform) codeword usage due to the deterministic nature of VQ-VAE quantization.
Compared to existing baseline approaches, our method improves codeword utilization uniformity and achieves higher sum achievable rates under fixed-length feedback in massive MIMO systems. Furthermore, we provide interpretability insights, showing that the learned codewords are strongly correlated with CSI.

\section{System Model}
We consider the downlink of an FDD massive MIMO system, where a BS equipped with $M$ transmit antennas serves $K$ single-antenna UEs, with $K \leq M$.
The BS employs linear precoding, and the transmitted signal can be written as $\mathbf{x} = \sum_{k=1}^{K} \mathbf{v}_k s_k = \mathbf{V}\mathbf{s}$, where $\mathbf{v}_k \in \mathbb{C}^{M}$ is the precoding vector for the $k$-th UE, forming the $k$-th column of the precoding matrix $\mathbf{V} \in \mathbb{C}^{M \times K}$. The transmitted symbols are normalized such that
$\mathbb{E}[\mathbf{s}\mathbf{s}^H] = \mathbf{I}_K$, and the total power constraint requires $\mathrm{Tr}(\mathbf{V}\mathbf{V}^H) \le P$. Assuming a narrowband block-fading channel model, the received signal at the $k$-th UE during the data transmission phase is $y_k = \mathbf{h}_k^H \mathbf{v}_k s_k + \sum_{j \ne k} \mathbf{h}_k^H \mathbf{v}_j s_j + w_k$, where $\mathbf{h}_k \in \mathbb{C}^{M}$ denotes the downlink channel vector from the BS to the $k$-th UE, and $w_k \sim \mathcal{CN}(0,\sigma_w^2)$ is additive white Gaussian noise. The achievable rate (bps/Hz) of UE $k$ is
\begin{equation} \label{eq:rate}
    R_k = \log_2 \!\left( 1 + 
    \frac{|\mathbf{h}_k^H \mathbf{v}_k|^2}
    {\sum_{j \ne k} |\mathbf{h}_k^H \mathbf{v}_j|^2 + \sigma_w^2} \right).
\end{equation}
The system sum rate of $K$ UEs is defined as $R = \sum_{k=1}^{K} R_k$. 

To design the precoding matrix $\mathbf{V}$, the BS requires access to instantaneous CSI. We assume that neither the BS nor the UEs have prior knowledge of the channel realizations, and CSI is acquired via downlink training and feedback. In the downlink training, the BS transmits pilot symbols $\widetilde{\mathbf{X}} \in \mathbb{C}^{M \times L}$, where $L$ is the training length. The received pilot signal at the $k$-th UE is
\begin{equation}
\yv_k = \hv_k^{H} \widetilde{\mathbf{X}} + \wv_k,
\end{equation}
where $\wv_k \sim \mathcal{CN}(\mathbf{0}, \sigma_w^2 \mathbf{I}_L)$. Each pilot vector $\widetilde{\mathbf{x}}_\ell$ satisfies the power constraint $\|\widetilde{\mathbf{x}}_\ell\|_2^2 \le P$. When $\mathbf{y}_k$ is received by the \mbox{$k$-th} UE,  it is mapped to $B$ feedback bits by a function as
\begin{align}
    \mathbf{q}_k = \mathcal{F}_k(\mathbf{y}_k), 
    \quad \mathcal{F}_k : \mathbb{C}^{1 \times L} \rightarrow \{ 0,1 \}^B.
\end{align}
The BS collects the feedback from all $K$ UEs, denoted by
$\mathbf{q} = [\mathbf{q}_1^T,\ldots,\mathbf{q}_K^T]^T$, and designs the precoding matrix using the function $\mathcal{P}$ as
\begin{align}
    \mathbf{V} = \mathcal{P}(\mathbf{q}),
    \quad \mathcal{P} : \{0, 1\}^{KB} \rightarrow \mathbb{C}^{M \times K}.
\end{align}
Given the above model, the sum-rate maximization problem under limited feedback can be formulated as
\begin{align}
    \max_{\widetilde{\mathbf{X}}, \{\mathcal{F}_k\}^K, \mathcal{P}}
    \sum_{k=1}^{K} 
    R_k , \:
    \text{s.t.} \:
     \mathrm{Tr}\left(\mathbf{V}\mathbf{V}^H\right) \le P, \: \|\widetilde{\mathbf{x}}_\ell\|_2^2 \le P, \forall \ell.
\end{align}
The formulated optimization problem above involves jointly \mbox{designing} the downlink training pilots, UE-side CSI estimation and compression, and BS-side precoding based on limited feedback. The strong coupling between channel estimation, feedback compression, and precoder design makes analytical optimization intractable, which motivates the use of data-driven learning-based approaches.



\section{Learning-based Precoding-Oriented CSI Feedback Design}
In this section, we introduce the end-to-end training \mbox{framework} for precoding-oriented CSI feedback design. On the UE side, to obtain a better compressed representation of the CSI, a non-linear transform is applied to the received pilot signals. The encoder $f_{\psi_k}^{\text{enc}}(\cdot)$ is parameterized by $\psi_k$ to encode $\yv_k$ as $\zv_k = f^{\text{enc}}_{\psi_k}(\yv_k)$, where $\zv_k \in \mathbb{R}^{D}$ is the continuous latent representation and $D$ is the dimension of the latent representation.
Based on the bandwidth of the feedback link, the encoded feedback containing the estimated CSI is quantized using a certain number of bits, which is then sent back to BS. In a more specific model design, $\zv_k$ is quantized for downlink channel UE feedback using a quantization function $Q(\cdot)$. The BS collects the feedback from $K$ UEs, and uses it to design the precoder through a decoder $f_{\phi}^{\text{dec}}(\cdot)$ parameterized by $\phi$ as
\begin{align}
    V = f^{\text{dec}}_{\phi}(\{Q(\zv_k)\}^K_{k=1}) \in \mathbb{C}^{M \times K}.
\end{align}
Given that the goal of precoding design is to maximize the sum achievable rate, this can be structured as goal-oriented code design where the objective is not to recover the source. Instead, the focus is on designing the precoder matrix for end-to-end training of the parameters $(\{\psi_k\}^K_{k=1}, \phi, \widetilde{X})$ in the model.

Since the quantization step is non-differentiable in the end-to-end training, one common approach is to use straight-through estimation (STE) \cite{bengio2013estimatingpropagatinggradientsstochastic} for gradient backpropagation.
However, quantizing the latent logits induces coarse binary partitions of the latent space and provides limited control over codeword utilization. As a result, the trained model often exhibits highly uneven codeword usage in UE feedback, leading to inefficient bandwidth utilization and suboptimal achievable rates. In the next section, we propose an alternative framework to address this issue.





\begin{figure*}[htb]
\centering
\includegraphics[width=0.76\textwidth]{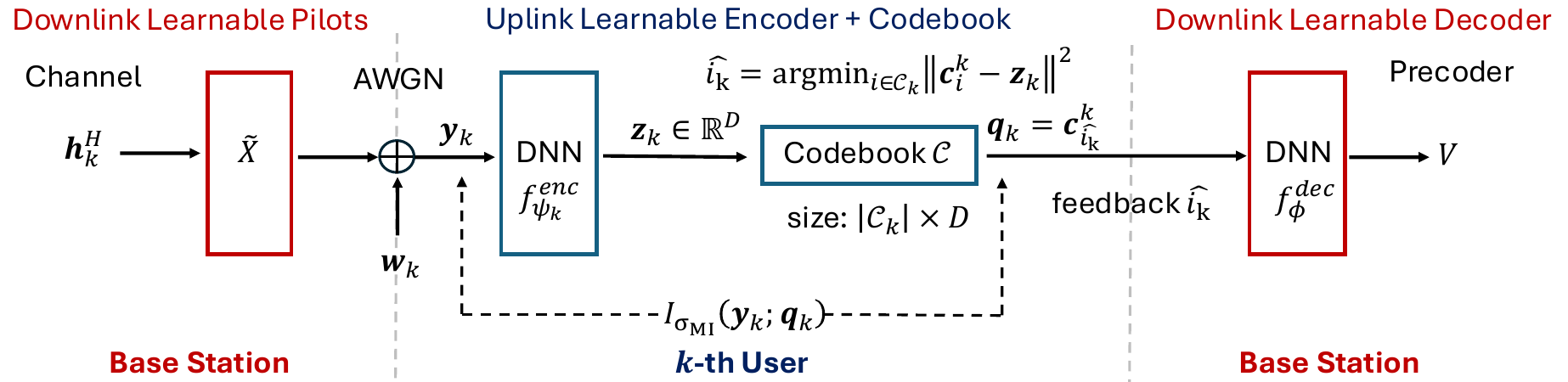}
\caption{The structure of VQ-VAE with MI regularizer illustrated for a single UE precoder-oriented CSI feedback design.}
\label{fig:VQVAE_single}
\vspace{-0.2in}
\end{figure*}

\section{Precoding-oriented CSI Feedback Design with Mutual Information Regularized VQ-VAE}
To tackle the problem of inefficient bandwidth utilization in the existing learning-based precoding-oriented CSI feedback design works, we propose using VQ-VAE for the quantization of UE feedback. This is because VQ-VAE employs an explicitly learned codebook that enables more expressive and structured quantization. 
This section presents a VQ-VAE–based framework in details for precoder-oriented CSI feedback. Furthermore, under a fixed feedback bandwidth constraint, we introduce a differentiable MI lower bound as a regularization term to promote more efficient utilization of the feedback bits.
\subsection{Vector Quantized Variational Autoencoder}
The VQ-VAE method is known to provide the discrete latent representation of the encoder input by first mapping the input source to a latent representation using non-linear neural networks. 
Quantization is achieved by searching the codebook for the codeword closest to the continuous representation in terms of Euclidean distance \cite{van2017neural}. The indices of the closest codewords are used as feedback bits from the UE side, and the codebook is learned simultaneously and shared between the BS and UE. In precoder-oriented CSI feedback design, VQ-VAE operates as follows: Given the downlink signal for each UE $\yv_k$, a separate encoder maps the signal to the latent representation as 
\begin{align}
    \zv_k = f^{\mathrm{enc}}_{\psi_k}(\yv_k) \in \mathbb{R}^{D}.
\end{align}
The codebook $\mathcal{C}_k$ for the $k$-th UE is defined as a finite set 
\begin{align}
    \mathcal{C}_k = \{(i,\, \cv^k_{i})\}^{|\mathcal{C}_k|}_{i=1},
\end{align}
where $i, \cv_i$ are the index and its corresponding codeword, respectively. The cardinality of the set is $|\mathcal{C}_k|$, and each \mbox{codeword} has dimension $D$. Quantization is performed by selecting the codeword closest to $\zv_k$ in terms of $\ell^2$ distance, with the detailed steps as
\begin{align}
    \hat{i}_k = \arg\min_{i \in \mathcal{C}_k} \|\cv^k_i - \zv_k\|_2^2, \qquad
    \qv_k = \cv^k_{\hat{i}_k}.
\end{align}
For simplicity, we denote the quantization step as
\begin{align}
    \qv_k = Q(\zv_k, \mathcal{C}_k).
\end{align}
After quantization, the indices of the selected codewords are sent to the BS in uplink feedback using a fixed-length code of $\log_2|\mathcal{C}_k|$ bits. The quantized vector $\qv_k$ corresponding to $\zv_k$ is retrieved from the shared codebook.  Feedback from all UEs is used at the BS to form the precoder as 
\begin{align}
    \mathbf{V} = f^{\mathrm{dec}}_{\phi}(\{\qv_k\}^K_{k=1}) \in \mathbb{C}^{M \times K}.
\end{align}

To enhance expressiveness and ensure feasible codebook implementation, our proposed solution learns multiple codebooks for a single UE.
 For example, allocating $20$ bits would require a codebook with $\log_2|\mathcal{C}|=20$, resulting in $2^{20}$ codewords, which is impractical. Instead, we use two separate codebooks, each represented with 10 bits, allowing the transmission of more bits while maintaining a manageable codebook size. Since the learning dynamics of codebooks differ from those of neural encoders and decoders, we adopt a separate learning rate for the codebook. This approach has been empirically shown to be crucial for stable training and effective codeword utilization.


\subsection{Differentiable MI Lower Bound}
For explicitly learned codebooks, codebook collapse is a well-known issue \cite{takida2022sqvaevariationalbayesdiscrete}, where only a small subset of codewords is frequently selected while others are rarely or never used, leading to inefficient utilization of the available feedback resources. To mitigate this problem, we propose to maximize the mutual information between the received signal at the UE, $\yv$, and the selected codeword, $\qv$. Under a fixed codebook size and limited feedback budget, maximizing this mutual information is equivalent to increasing the entropy of the codeword usage distribution, which promotes more uniform codeword utilization and improves the efficiency of the learned representations. This is because the quantized latent representation is a deterministic function of the original representation. Thus, this MI maximization is equivalent to maximizing entropy of the quantized encoded data and equivalently enforcing it to have close to uniform distribution as possible.

We first introduce a novel MI lower bound, which requires no training unlike the deep learning-based MI estimator, e.g., MINE \cite{belghazi2018mutual} and ReMINE \cite{choi2021regularized}.
\begin{theorem}\label{theorem:MI_LB_1}
    Assuming $Z=T(X)$, where $T:\mathcal{X}\rightarrow\mathcal{Z}$ is a deterministic function, we have
    \begin{equation*}
        \mathcal{I}(X,Z)\geq-\log \mathbb{E}_{P_{{Z}}P_{{Z}}} [\delta(z-z')],
    \end{equation*}
    where $z$, $z'$ are two r.v.s. independently drawn from $P_Z$. Here, 
    $\delta(\cdot)=1$ if its argument is zero, and $\delta(\cdot)=0$ otherwise.
\end{theorem}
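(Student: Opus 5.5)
Since $Z=T(X)$ is a deterministic function of $X$, the conditional entropy $H(Z\mid X)$ vanishes, so $\mathcal{I}(X,Z)=H(Z)-H(Z\mid X)=H(Z)$. The plan is to reduce the claim to a statement about $Z$ alone: the Shannon entropy of $Z$ dominates its collision (Rényi order-2) entropy, and the right-hand side of Theorem~\ref{theorem:MI_LB_1} is exactly that collision entropy. In the setting of interest $Z$ is the selected codeword, so it takes values in a finite (or countable) set and has a probability mass function $p(z)=P_Z(\{z\})$. I treat that discrete case first, which is the one the paper uses.

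The steps, in order, are as follows. First, compute the collision probability. For $z,z'$ drawn i.i.d.\ from $P_Z$,
\begin{equation*}
\mathbb{E}_{P_ZP_Z}[\delta(z-z')]=\Pr(z=z')=\sum_{z}p(z)^2=\mathbb{E}_{Z\sim P_Z}[p(Z)].
\end{equation*}
Second, write $H(Z)=\mathbb{E}_{Z}[-\log p(Z)]$. Since $-\log$ is convex, Jensen's inequality gives
\begin{equation*}
\mathbb{E}_Z[-\log p(Z)]\ge -\log \mathbb{E}_Z[p(Z)]=-\log\sum_z p(z)^2 .
\end{equation*}
Third, combine this with $\mathcal{I}(X,Z)=H(Z)$ to obtain the stated bound.

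The main obstacle is interpretive rather than computational: the statement does not restrict $Z$ to be discrete. If $P_Z$ has a continuous component, two possibilities arise.
\begin{itemize}
\item If $P_Z$ has no atoms at all, then $\Pr(z=z')=0$ and the right-hand side is $+\infty$. The bound still holds, because $\mathcal{I}(X,T(X))=H(T(X))=+\infty$ in that case.
\item If $P_Z$ is mixed, I would use the data-processing inequality. Let $W=g(Z)$ equal $Z$ on the atoms and a single extra symbol $\ast$ elsewhere. Then $\mathcal{I}(X,Z)\ge \mathcal{I}(X,W)=H(W)$.
\end{itemize}

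For the mixed case I am fairly, but not fully, sure the comparison needed goes the right way. The aim is to show $-\log\Pr(z=z')\ge$ the right-hand side as it is defined for $Z$. The two collision probabilities differ: $\Pr(w=w')$ includes the extra term $P(\ast)^2$, whereas $\Pr(z=z')$ counts only collisions on the atoms. So I would instead prove directly that $\mathcal{I}(X,Z)=\infty$ whenever $Z$ has a non-atomic part. That part can be quantized arbitrarily finely, which makes $H(g_n(Z))\to\infty$, so the bound holds trivially.

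For the paper's purposes, I would present the discrete case as the main proof and remark on this extension afterwards.
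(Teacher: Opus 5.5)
Your discrete-case argument is the paper's proof. You use determinism of $T$ to get $\mathcal{I}(X,Z)=H(Z)$, apply Jensen to $\mathbb{E}[-\log p(Z)]$, and identify $\mathbb{E}[p(Z)]=\sum_z p(z)^2$ with $\mathbb{E}_{P_ZP_Z}[\delta(z-z')]$. Your extra remark on non-atomic $P_Z$ is sound: there $\mathcal{I}(X,Z)=\infty$ via arbitrarily fine quantizations, so the bound holds trivially. The paper never addresses this case, because its proof implicitly assumes $Z$ is discrete with a pmf.
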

\begin{proof}
We have
\[
I(X;Z) = H(Z) - H(Z|X) = H(Z),
\]
based on the definition of MI and the fact that $T$ is deterministic, which implies
$H(Z|X) = 0$. By Jensen's inequality,
\[
H(Z) 
= \mathbb{E}_{p_Z}\!\left[- \log P_Z(z)\right]
\ge - \log \, \mathbb{E}_{p_Z}\!\left[P_Z(z)\right].
\]
Further,
\begin{align*}
- \log \mathbb{E}_{p_Z}\!\left[P_Z(z)\right]
&= - \log \, \mathbb{E}_{p_Z} \left[ \sum_{z'} \delta(z - z') P_Z(z') \right] \\
&= - \log \mathbb{E}_{p_Z} \mathbb{E}_{p_{Z'}} 
      \!\left[ \delta(z - z') \right] \\
&= - \log \mathbb{E}_{p_Z p_{Z'}} 
      \!\left[ \delta(z - z') \right].
\end{align*}
Therefore,
\[
I(X;Z) \ge 
- \log \, 
\mathbb{E}_{p_{Z} p_{Z'}}\!\left[ \delta(z - z') \right].
\qedhere
\]
We note that the approximation error comes from Jensen's 
inequality, which can be tight under certain conditions.
\end{proof}
Given the non-differentiability of the indicator function $\delta(\cdot)$, it is challenging to plug the MI term into an end-to-end deep-learning model, as well as in VQ-VAE. Therefore, we relax the MI estimator by replacing the indicator function with a Gaussian kernel, as follows
\begin{align}
    I(X;Z) \ge 
    - \log \, 
    \mathbb{E}_{p_Z p_{Z'}} 
    \!\left[
    \exp\!\left( 
    - \frac{\|z - z'\|^2}{2 \sigma_{\text{MI}}^2}
    \right)
    \right],
\end{align}
where $\sigma_{\text{MI}}$ is the variance, which controls the approximation accuracy. When $\sigma_{\text{MI}}$ approaches zero, the kernel approximator approaches the delta function, but also affects the training stability. The expectation over $p_Z p_{Z'}$ is estimated by sampling pairs with independent elements $(z, z')$ from the latent space and averaging, given that the latent distribution is unknown.

\subsection{MI Empowered VQ-VAE for CSI Feedback Design}
Note that the codebook(s) may be shared across UEs or assigned individually, with each UE employing one or multiple codebooks. The codebook(s) are known to both the UEs and the BS; thus, the uplink feedback from each UE consists of $N \times \log_2 |\mathcal{C}|$ bits, where $N$ is the number of codebooks per UE, assuming all codebooks have the same number of codewords. The BS aggregates the feedback from all UEs and uses the decoder to construct the precoder matrix $\mathbf{V} \in \mathbb{C}^{M \times K}$. This precoder is then incorporated into the end-to-end training objective of VQ-VAE. The objective function to be minimized comprises three terms:
\begin{align}
    \mathcal{L}(\psi,\phi,\mathcal{C})
    = \sum_{k=1}^{K} - R_k 
    - \gamma\, I\left( \yv_k; \qv_k \right)
    + C L_{k},
\end{align}
where $R_k$ is defined in \eqref{eq:rate}. The estimated mutual information between $\yv_k$ and $\qv_k$, where $\qv_k = Q(f^{\mathrm{enc}}_{\psi_k}(\yv_k), \mathcal{C}_k)$, is
\begin{align*}
    I\left( \yv_k ; \qv_k \right)
    = - \log 
    \left(
    \frac{1}{\kappa S}\sum^{\kappa S}_{b \neq b'} 
    \exp \Bigg(
    - \frac{ \| \qv^{(b)}_k - \qv^{(b')}_k \|^2}{2\sigma_{\text{MI}}^2}
    \Bigg)
    \right).
\end{align*}
Here, $b$, $b'$ denote the different indices in the training batch, $S$ is the batch size, the Monte-Carlo approach $\frac{1}{\kappa S}\sum^{\kappa S}_{b\neq b'}$ is applied batch-wise on pairs of samples in latent space. To improve the estimation accuracy, the sampling size is $\kappa$ times of the batch size. The commitment loss term $CL_k$ constrains the learned codebook to be a good quantizer
\begin{align}
    CL_k 
    = \beta \left\| \mathrm{sg}(\qv_k) - \zv_k \right\|^2 
     + \left\| \qv_k - \mathrm{sg}(\zv_k) \right\|^2,
\end{align}
where $\mathrm{sg}(\cdot)$ denotes the stop-gradient operator. $\gamma, \beta$ are \mbox{hyperparameters} for balancing the terms in loss function. In 
estimation of $I(\yv_k; \qv_k)$, the STE is used as
\begin{align}
    \qv_k = \zv_k + \mathrm{sg}(\qv_k - \zv_k),
\end{align}
which handles the non-differentiability of the quantization operation. The pipeline is shown in Fig.~\ref{fig:VQVAE_single}. The constrained optimization problem of the downlink channel training is
\begin{align}
    \min_{\widetilde{\mathbf{X}}, \psi, \phi, \mathcal{C}}
    \quad 
    \sum_{k=1}^{K} - R_k 
    - \gamma I\left( \yv_k; \qv_k \right)
    + C L_k,
\end{align}
\[
\text{s.t.} \quad  \forall l \:\:\:  \|\xv_{l}\|^{2} \le P, \quad  \operatorname{Tr}\!\left( \mathbf{V}\mathbf{V}^{H} \right) \le P.
\]
where the optimization variables are the trainable pilots $\widetilde{\mathbf{X}}$ and VQ-VAE parameters $(\psi, \phi, \mathcal{C})$. For simplicity, we use $(\psi, \phi, \mathcal{C})$ to collectively denote all VQ-VAE parameters, although in implementation each UE employs a separate encoder and may use multiple codebooks.

\section{Numerical results}
We compare the proposed method with both oracle CSI at the transmitter (CSIT)-based precoding schemes and learning-based CSI feedback approaches. In particular, we include two classical linear precoding methods: maximum ratio transmission (MRT) and zero-forcing (ZF), which assume access to full CSIT. These methods serve as standard linear-precoding benchmarks under full CSIT. While MRT maximizes the desired signal power and is optimal in interference-free regimes, ZF mitigates multi-user interference, achieving near-optimal performance in multi-user settings. We also compare against two representative deep learning methods: (1) precoding-oriented DNNs with sign function as UE feedback quantization \cite{sohrabi2021deep}; (2) precoding-oriented DNNs with neural entropy coding on the UE feedback \cite{carpi2023precoding}.

\subsection{Channel Model}
We consider an FDD massive MIMO system operating in a mmWave propagation environment where the number of scatters is limited. Therefore, the channel of the $k$-th UE is modeled with $L_p$ propagation paths as \cite{pi2011introduction, sohrabi2016hybrid}
\begin{align} \label{eq:channel_model}
    \mathbf{h}_k 
    = \frac{1}{\sqrt{L_p}}
    \sum_{l=1}^{L_p} \alpha_{l,k}\, \mathbf{a}_t(\theta_{l,k}),
\end{align}
where $\alpha_{l,k}$ is the complex gain of the $l$-th path between the BS and UE (capturing phase shift and attenuation), $\theta_{l,k}$ is the angle of departure (AoD), and $\mathbf{a}_t(\cdot)$ is the transmit array response vector. To model a uniform linear array with $M$ antenna elements, the transmit 
array response vector is
\begin{align*}
    \mathbf{a}_t(\theta)
    =
    \begin{bmatrix}
    1,\;
    e^{j\frac{2\pi}{\lambda} d \sin(\theta)},\;
    \ldots,\;
    e^{j\frac{2\pi}{\lambda} d (M-1)\sin(\theta)}
    \end{bmatrix}^{T},
\end{align*}
where $\lambda$ is the wavelength and $d$ is the antenna spacing in the linear array. The path fading is modeled as $\alpha_{l,k} \sim \mathcal{CN} \left(0,\sigma^{2}\right)$. 

\begin{figure}[h]
    \centering
    \includegraphics[width=0.75\columnwidth]{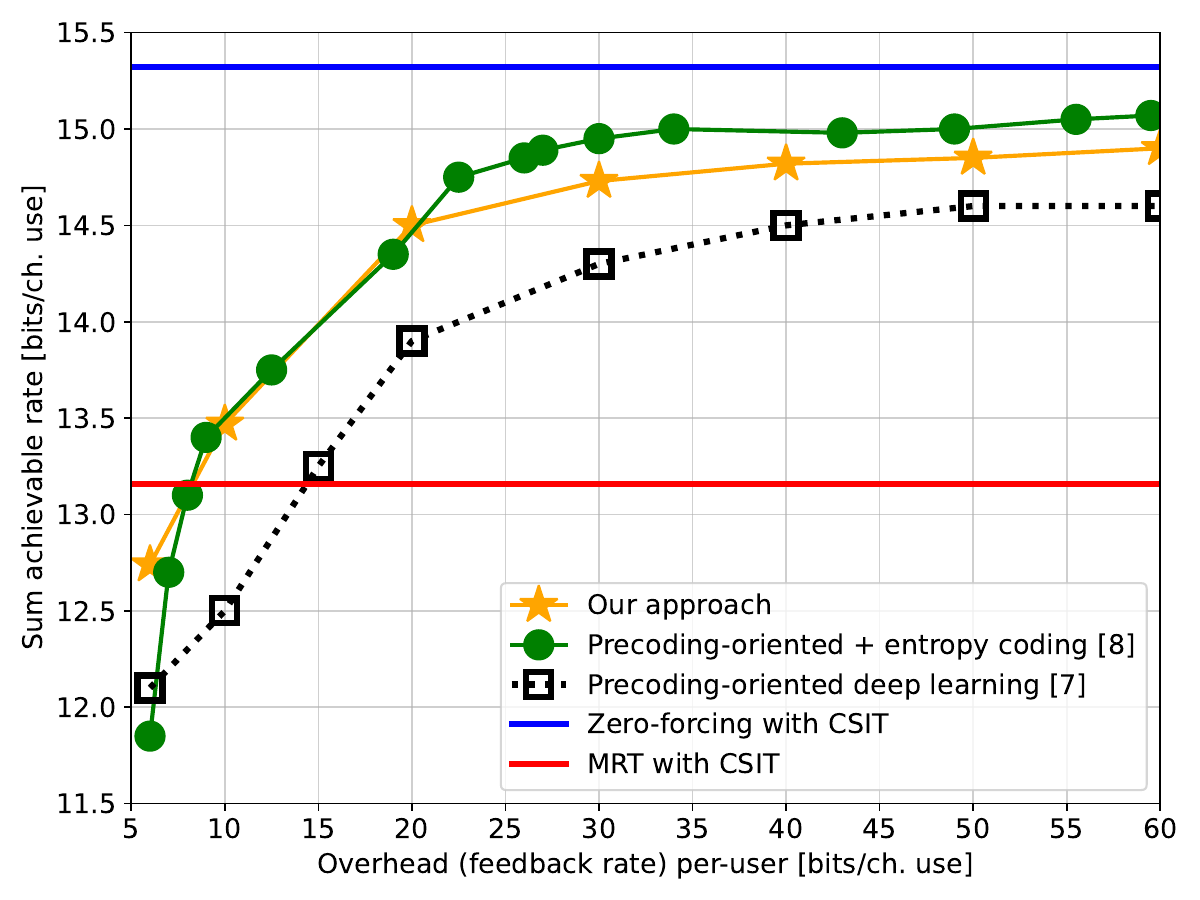}
    \caption{Performance comparison in terms of sum achievable rate.}
    \label{fig:rate_comparison}
    \vspace{-0.1in}
\end{figure}

\subsection{Sum Achievable Rate v.s. Feedback Bits Performance}
We train the proposed VQ-VAE for CSI feedback using the channel model in \eqref{eq:channel_model}. All channel and training hyperparameters are summarized in Table~\ref{tab:exp_setup}. Channel realizations are freshly generated in each mini-batch according to the specified distributions. The number of batches per epoch is $20$, and each batch contains $10{,}000$ samples. Model parameters are optimized using the Adam optimizer. We evaluated our method and the baselines based on the sum achievable rate, and plotted against the overhead feedback rate per UE in Fig \ref{fig:rate_comparison}. It can be seen that our approach outperforms the deep-learning-based method \cite{sohrabi2021deep} and achieves comparable performance to the neural entropy coding approach \cite{carpi2023precoding}, especially when the overhead is less than $10$ bits, while being more practical.

\begin{figure}[h]
    \centering
    \includegraphics[width=0.75\columnwidth]{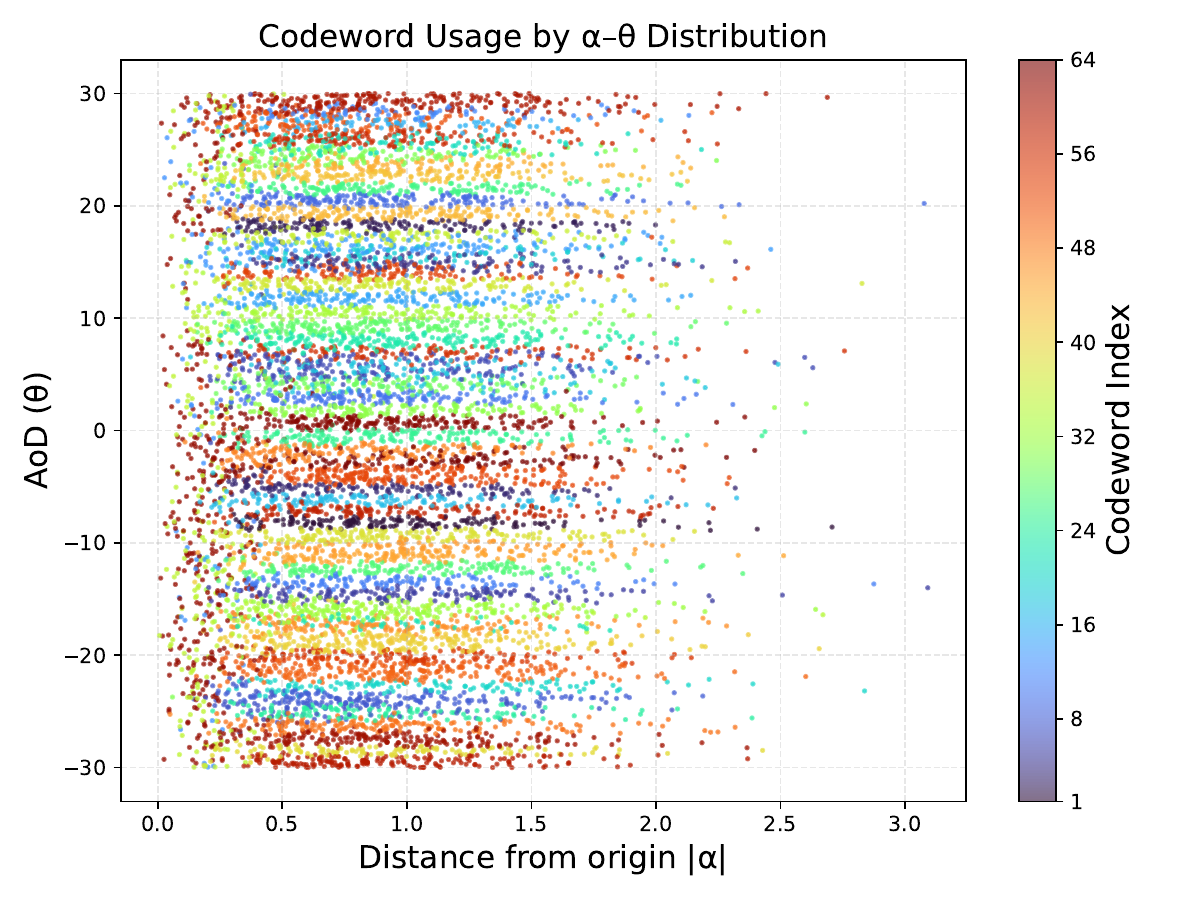}
    \caption{Channel information and its assigned codeword.}
    \label{fig:codeword_vis}
    \vspace{-0.25in}
\end{figure}

\subsection{Channel interpretability of learned codeword}
Since the codebook is explicitly learned in our framework, we further evaluate and interpret the relationship between the learned codewords and the underlying channel state information. To make the visualization simple, here we focus on a single-path scenario where the channel can be represented by only a few parameters. We generate $10{,}000$ channel realizations using a single-path channel model and visualize how the CSI of an individual UE is mapped to the learned codewords. Under a $6$-bit feedback constraint per UE, with the channel model parameterized by the AoD and the path fading coefficient $\alpha$, Fig.~\ref{fig:codeword_vis} illustrates the association between CSI and learned codewords. We observe that similar AoD results in assignment to similar codewords, indicating that $\theta$, the dominant factor characterizing the channel, plays a more significant role in the learned codebook structure. This observation is consistent with domain knowledge in precoder design, where the channel direction plays a more critical role than its magnitude. Moreover, we observe that a few codewords are predominantly associated with weak channel realizations. This suggests that when the received SNR is low, the encoder tends to map such channels to a small subset of codewords, reflecting the limited impact that these UEs have on the overall precoding design since their achievable rates are inherently constrained.

\begin{figure}[h]
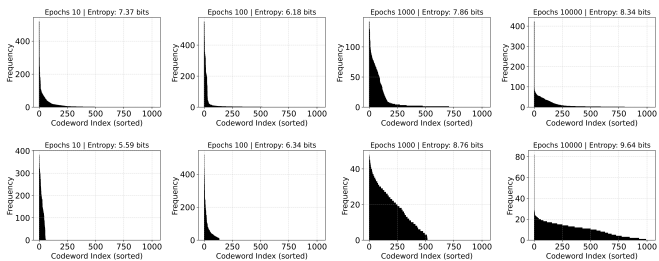

    \centering
    \includegraphics[width=1.0\columnwidth]{figures/histogram_U1_binary_quantization.png}\\
    \includegraphics[width=1.0\columnwidth]{figures/histogram_U1_VQVAE_MI.png}
    \caption{The number of feedback bits per UE is fixed to $10$. Histograms of codeword usage across different training epochs: (Top) DNN-based method in \cite{sohrabi2021deep}; (Bottom) Our proposed method exhibits a more uniform distribution.}
    \label{fig:codeword_usage}
    \vspace{-0.2in}
\end{figure}

\subsection{Codebook Usage with Allocated Bits}
We obtain the histogram of codeword usage throughout training and compare our approach with the baseline in Fig.~\ref{fig:codeword_usage}. As training progresses, the proposed method yields a more uniformly distributed codeword usage than the approach in \cite{sohrabi2021deep}, effectively mitigating the codebook collapse issue.


\section{Conclusion}
We introduced an MI regularized VQ-VAE framework for precoding-oriented CSI feedback in massive MIMO systems. By explicitly promoting uniform and informative codeword utilization under a feedback bandwidth constraint, the proposed approach mitigates codebook collapse and achieves improved sum achievable rates compared with existing methods. In addition, the learned codebook exhibits meaningful structure correlated with key channel parameters, providing \mbox{interpretability} into the learned CSI representations.

\begin{table}[t]
\centering
\scriptsize
\caption{Simulation and training configuration}
\label{tab:exp_setup}
\resizebox{\columnwidth}{!}{
\begin{tabular}{ll}
\toprule
Number of antennas $M$, UEs $K$, paths $L_p$, pilots / UE $L$ & $64$, $2$, $2$, $8$ \\
SNR & $10$ dB \\
AoD $\theta_{l,k}$ & $\sim \mathcal{U}(-\pi/6,\pi/6)$ \\
Path gain $\alpha_{l,k}$ & $\sim \mathcal{CN}(0,1)$ \\ \midrule
Codeword dimension $D$ & $256$ \\
Codebook learning rate, Other parameters learning rate & $1\times 10^{-2}$, $1\times 10^{-4}$ \\
Commitment loss weight $\beta$ & $0.1$ \\
MI relaxation parameter $\sigma_{\text{MI}}$, loss weight $\gamma$, sampling parameter $\kappa$ & $0.05$, $1.0$, $16$ \\
\bottomrule
\end{tabular}
}
\vspace{-0.2in}
\end{table}

\bibliographystyle{IEEEtran}
\bibliography{refs}

@article{sohrabi2021deep,
  title={Deep learning for distributed channel feedback and multiuser precoding in {FDD} massive {MIMO}},
  author={Foad Sohrabi and Kareem M Attiah and Wei Yu},
  journal={IEEE Trans. Wireless Commun.},
  volume={20},
  number={7},
  pages={4044--4057},
  year={2021},
  publisher={IEEE}
}

@inproceedings{carpi2023precoding,
  author    = {Fabrizio Carpi and Sivarama Venkatesan and Jinfeng Du and Harish Viswanathan and Siddharth Garg and Elza Erkip},
  title     = {Precoding-oriented massive {MIMO} {CSI} feedback design},
  booktitle = {Proc. IEEE Int. Conf. Commun. (ICC)},
  address   = {Rome, Italy},
  month     = may,
  year      = {2023},
  pages     = {4973--4978}
}

@inproceedings{belghazi2018mutual,
  title={Mutual information neural estimation},
  author={Belghazi, Mohamed Ishmael and Baratin, Aristide and Rajeshwar, Sai and Ozair, Sherjil and Bengio, Yoshua and Courville, Aaron and Hjelm, Devon},
  booktitle={ICML},
  pages={531--540},
  year={2018},
  organization={PMLR}
}

@misc{
choi2021regularized,
title={Regularized Mutual Information Neural Estimation},
author={Kwanghee Choi and Siyeong Lee},
year={2021},
url={https://openreview.net/forum?id=Lvb2BKqL49a}
}

@misc{takida2022sqvaevariationalbayesdiscrete,
      title={{SQ-VAE}: Variational Bayes on Discrete Representation with Self-annealed Stochastic Quantization}, 
      author={Yuhta Takida and others},
      year={2022},
      eprint={2205.07547},
      archivePrefix={arXiv},
      primaryClass={cs.LG},
      url={https://arxiv.org/abs/2205.07547}, 
}

@article{pi2011introduction,
  title={An introduction to millimeter-wave mobile broadband systems},
  author={Pi, Zhouyue and Khan, Farooq},
  journal={IEEE Commun. Mag.},
  volume={49},
  number={6},
  pages={101--107},
  year={2011},
  publisher={IEEE}
}

@article{sohrabi2016hybrid,
  title={Hybrid digital and analog beamforming design for large-scale antenna arrays},
  author={Sohrabi, Foad and Yu, Wei},
  journal={IEEE J. Sel. Topics Signal Process.},
  volume={10},
  number={3},
  pages={501--513},
  year={2016},
  publisher={IEEE}
}

@article{love2008overview,
  title={An overview of limited feedback in wireless communication systems},
  author={Love, David J and others},
  journal={IEEE J. Sel. Areas Commun.},
  volume={26},
  number={8},
  pages={1341--1365},
  year={2008},
  publisher={IEEE}
}

@article{rao2014distributed,
  title={Distributed compressive {CSIT} estimation and feedback for {FDD} multi-user massive {MIMO} systems},
  author={Rao, Xiongbin and Lau, Vincent KN},
  journal={IEEE Trans. Signal Process.},
  volume={62},
  number={12},
  pages={3261--3271},
  year={2014},
  publisher={IEEE}
}

@article{gao2015spatially,
  title={Spatially common sparsity based adaptive channel estimation and feedback for {FDD} massive {MIMO}},
  author={Gao, Zhen and Dai, Linglong and Wang, Zhaocheng and Chen, Sheng},
  journal={IEEE Trans. Signal Process.},
  volume={63},
  number={23},
  pages={6169--6183},
  year={2015},
  publisher={IEEE}
}

@inproceedings{schniter2014channel,
  author    = {Philip Schniter and Akbar Sayeed},
  title     = {Channel estimation and precoder design for millimeter-wave communications: The sparse way},
  booktitle = {Proc. Asilomar Conf. Signals, Systems, and Computers},
  address   = {Pacific Grove, CA, USA},
  month     = nov,
  year      = {2014},
  pages     = {273--277}
}

@article{guo2020deep,
  title={Deep learning-based {CSI} feedback for beamforming in single-and multi-cell massive {MIMO} systems},
  author={Guo, Jiajia and Wen, Chao-Kai and Jin, Shi},
  journal={IEEE J. Sel. Areas Commun.},
  volume={39},
  number={7},
  pages={1872--1884},
  year={2020},
  publisher={IEEE}
}

@article{guo2022overview,
  title={Overview of deep learning-based {CSI} feedback in massive {MIMO} systems},
  author={Guo, Jiajia and Wen, Chao-Kai and Jin, Shi and Li, Geoffrey Ye},
  journal={IEEE Transactions on Communications},
  volume={70},
  number={12},
  pages={8017--8045},
  year={2022},
  publisher={IEEE}
}

@article{van2017neural,
  title={Neural discrete representation learning},
  author={Van Den Oord, Aaron and others},
  journal={NeurIPS},
  volume={30},
  year={2017}
}

@misc{bengio2013estimatingpropagatinggradientsstochastic,
      title={Estimating or Propagating Gradients Through Stochastic Neurons for Conditional Computation}, 
      author={Yoshua Bengio and Nicholas Léonard and Aaron Courville},
      year={2013},
      eprint={1308.3432},
      archivePrefix={arXiv},
      primaryClass={cs.LG},
      url={https://arxiv.org/abs/1308.3432}, 
}

\end{document}